\documentclass[11pt]{article}
\usepackage[margin=1in]{geometry}
\usepackage[T1]{fontenc}

\usepackage{amsmath,amssymb,amsthm}
\usepackage[table]{xcolor}
\usepackage{booktabs}
\usepackage{graphicx}
\usepackage{caption}
\usepackage{subcaption}
\usepackage{enumitem}
\usepackage[hidelinks]{hyperref}

\definecolor{hl}{RGB}{255,214,153}
\definecolor{hl2}{RGB}{204,229,255}
\newtheorem{lemma}{Lemma}
\newcommand{\PsiE}{\Psi_E}
\newcommand{\enc}{\mathrm{enc}}
\newcommand{\BWT}{\mathrm{BWT}}
\setlist{itemsep=2pt,topsep=4pt}

\title{Mixing FM-indexes and CSAs:\\ backward search over an order-1 rank encoding}
\author{Travis Gagie\\ \small Center for Biotechnology and Bioengineering (CeBiB), Chile}
\date{Preliminary version, \today}

\begin{document}
\maketitle

\begin{abstract}
FM-indexes and compressed suffix arrays (CSAs) are often treated as interchangeable, but they behave differently as the alphabet grows. An FM-index step costs about one cache miss per level of a wavelet tree, so it gets slower with the alphabet size. A CSA step is a binary search whose range shrinks as characters get rarer. We describe a simple hybrid. Each character of the text is replaced by the rank of its frequency among the characters that follow the previous character. We backward-search on this encoding, which is over a small, skewed alphabet, and recover the one piece of information the encoding loses (the first character of the pattern) with a single CSA-like step on an array we call $\PsiE$. Counting is exact, and locating works with standard suffix-array sampling. A prototype on synthetic repetitive data shows that the hybrid is the fastest of the indexes we tried at intermediate alphabet sizes with 1\% noise, but even its compact version is 1.7 to 3.9 times larger than a compressed run-length CSA or FM-index of the original text, because the encoding and $\PsiE$ together have more runs than the original Burrows--Wheeler transform. Whether that changes on real data, such as parses and minimizer digests, is the main open question.
\end{abstract}

\section{Introduction}\label{sec:intro}

Indexes for highly repetitive text collections, such as pangenomes, are usually built for small alphabets: DNA has four characters. Increasingly, though, we want to index repetitive sequences over large alphabets. In two-level indexing we index the parse of a text, whose alphabet is the dictionary of distinct phrases~\cite{BGMNS25,HOKBBG24}. Minimizer digests of pangenomes treat each minimizer as a meta-character~\cite{AhmedEtAl23}. Collections of versioned documents can be viewed as sequences of words~\cite{FBNCPR12}. These sequences are still repetitive, so their Burrows--Wheeler transforms (BWTs) still have relatively few runs, but the alphabets have thousands or millions of symbols.

For such data, the choice between the two classic families of compressed indexes matters. The FM-index~\cite{FM05} represents the BWT with a wavelet tree, and one backward-search step costs one rank query per level of the tree: about $\lg\sigma$ levels for an alphabet of size $\sigma$, or about $H_0$ levels with a Huffman-shaped tree. Each level is a separate bitvector, so each level can cost a cache miss. A compressed suffix array (CSA)~\cite{GV05} instead stores the permutation $\Psi$, and one step is a binary search in the part of $\Psi$ belonging to one character, whose length is that character's frequency (or, in a run-length compressed CSA, its number of runs~\cite{BGMNS25}). For large alphabets those parts are short, so CSAs tend to win; for small alphabets FM-indexes tend to win.

\paragraph{The idea, informally.}
In most real sequences, the previous character tells us a lot about the next one. Instead of storing each character, we store how \emph{surprising} it is: its rank by frequency among the characters that follow the previous character in the text. In English, after ``q'' comes ``u'' almost always, so ``u'' after ``q'' becomes a 1. The encoded text $E$ is then mostly small numbers (on our test data, mostly 1s), so backward search on $E$ can use a very shallow wavelet tree. We can encode a pattern $P$ the same way, using the text's rankings, and search for it in $E$. The catch is that the encoding of $P$ only makes sense given $P$'s first character: the string of ranks ``1, 1, 1'' means ``\texttt{ANA}'' after a \texttt{B} but ``\texttt{BAN}'' after a \texttt{\_}. So after the search, we must keep only the encoded suffixes that are preceded in the text by $P[0]$. We do that with one lookup in an array $\PsiE$ of the (unencoded) characters that precede the encoded suffixes, stored the way a CSA stores $\Psi$. The result is an FM-index for all but one step of the search, and a CSA for the last one, which is the only step that has to deal with the large alphabet.

\paragraph{What we found.}
The hybrid is correct and simple to implement. On synthetic repetitive data it has the following properties:
\begin{itemize}
\item Its query time barely depends on the alphabet size, because the Huffman-shaped wavelet tree over the encoding has depth about 2 whatever the alphabet.
\item At intermediate alphabet sizes (10 to 10{,}000) with 1\% noise, it is the fastest of the indexes we tried, by 1.1 to 1.8 times (the smallest margin is within measurement noise).
\item It is never the smallest. The encoding and $\PsiE$ together have 1.6 to 4.1 times as many runs as the BWT of the text, so even the compact hybrid is 1.7 to 3.9 times larger than a compressed run-length CSA of the text, and no faster than it.
\end{itemize}
So on this data the hybrid buys speed with space, and its compact version is dominated. The question that decides whether it is useful is how many runs the encoding and $\PsiE$ have on real parses and digests.

\paragraph{Organization.}
Section~\ref{sec:background} reviews FM-indexes, CSAs and their run-length versions on a small example, with attention to memory locality. Section~\ref{sec:hybrid} describes the hybrid on the same example, proves that counting is exact, and sketches locating. Section~\ref{sec:related} discusses related work, Section~\ref{sec:impl} our implementation, Section~\ref{sec:experiments} experiments, and Section~\ref{sec:discussion} open questions, including what the hybrid means for move structures.

\section{Background}\label{sec:background}

Let $T[0..n-1]$ be a text over an alphabet of size $\sigma$, terminated by a unique smallest character $\$$. The suffix array $SA[0..n-1]$ lists the starting positions of the suffixes of $T$ in lexicographic order. The BWT $L[0..n-1]$ has $L[i] = T[SA[i]-1]$ (cyclically), and $F[i] = T[SA[i]]$ is the first character of the $i$th suffix. We write $r$ for the number of runs of equal characters in $L$, and $n_c$ and $r_c$ for the number of occurrences of $c$ in $T$ and the number of runs of $c$ in $L$.

Our running example is
\[ T = \texttt{BANANA\_BANDANA\_BANANA\_CABANA\$}, \]
with $n = 29$ and $\sigma = 7$ (ordered $\$ < \texttt{A} < \texttt{B} < \texttt{C} < \texttt{D} < \texttt{N} < \texttt{\_}$). Figure~\ref{fig:sorted} shows its sorted suffixes. The BWT has $r = 13$ runs.

\begin{figure}[p]
\centering\small
\begin{tabular}{rrrlc}
\toprule
$i$ & $SA[i]$ & $\Psi[i]$ & suffix $T[SA[i]..n-1]$ (first character is $F[i]$) & $L[i]$ \\
\midrule
0 & 28 & 14 & \texttt{\$} & \texttt{A} \\
\midrule
1 & 27 & 0 & \texttt{A\$} & \texttt{N} \\
2 & 23 & 13 & \texttt{ABANA\$} & \texttt{C} \\
\rowcolor{hl2}3 & 25 & 19 & \texttt{ANA\$} & \texttt{B} \\
\rowcolor{hl2}4 & 1 & 20 & \texttt{ANANA\_BANDANA\_BANANA\_CABANA\$} & \texttt{B} \\
\rowcolor{hl2}5 & 16 & 21 & \texttt{ANANA\_CABANA\$} & \texttt{B} \\
\rowcolor{hl2}6 & 11 & 22 & \texttt{ANA\_BANANA\_CABANA\$} & \texttt{D} \\
\rowcolor{hl2}7 & 3 & 23 & \texttt{ANA\_BANDANA\_BANANA\_CABANA\$} & \texttt{N} \\
\rowcolor{hl2}8 & 18 & 24 & \texttt{ANA\_CABANA\$} & \texttt{N} \\
9 & 8 & 25 & \texttt{ANDANA\_BANANA\_CABANA\$} & \texttt{B} \\
10 & 13 & 26 & \texttt{A\_BANANA\_CABANA\$} & \texttt{N} \\
11 & 5 & 27 & \texttt{A\_BANDANA\_BANANA\_CABANA\$} & \texttt{N} \\
12 & 20 & 28 & \texttt{A\_CABANA\$} & \texttt{N} \\
\midrule
\rowcolor{hl}13 & 24 & 3 & \texttt{BANA\$} & \texttt{A} \\
\rowcolor{hl}14 & 0 & 4 & \texttt{BANANA\_BANDANA\_BANANA\_CABANA\$} & \texttt{\$} \\
\rowcolor{hl}15 & 15 & 5 & \texttt{BANANA\_CABANA\$} & \texttt{\_} \\
16 & 7 & 9 & \texttt{BANDANA\_BANANA\_CABANA\$} & \texttt{\_} \\
\midrule
17 & 22 & 2 & \texttt{CABANA\$} & \texttt{\_} \\
\midrule
18 & 10 & 6 & \texttt{DANA\_BANANA\_CABANA\$} & \texttt{N} \\
\midrule
19 & 26 & 1 & \texttt{NA\$} & \texttt{A} \\
20 & 2 & 7 & \texttt{NANA\_BANDANA\_BANANA\_CABANA\$} & \texttt{A} \\
21 & 17 & 8 & \texttt{NANA\_CABANA\$} & \texttt{A} \\
22 & 12 & 10 & \texttt{NA\_BANANA\_CABANA\$} & \texttt{A} \\
23 & 4 & 11 & \texttt{NA\_BANDANA\_BANANA\_CABANA\$} & \texttt{A} \\
24 & 19 & 12 & \texttt{NA\_CABANA\$} & \texttt{A} \\
25 & 9 & 18 & \texttt{NDANA\_BANANA\_CABANA\$} & \texttt{A} \\
\midrule
26 & 14 & 15 & \texttt{\_BANANA\_CABANA\$} & \texttt{A} \\
27 & 6 & 16 & \texttt{\_BANDANA\_BANANA\_CABANA\$} & \texttt{A} \\
28 & 21 & 17 & \texttt{\_CABANA\$} & \texttt{A} \\
\bottomrule
\end{tabular}
\caption{Sorted suffixes of $T = \texttt{BANANA\_BANDANA\_BANANA\_CABANA\$}$. Horizontal rules separate the blocks of suffixes starting with each character. Blue rows are the suffixes starting with \texttt{ANA}; orange rows are those starting with \texttt{BANA}. Both an FM-index and a CSA find the orange interval from the blue one in one step.}
\label{fig:sorted}
\end{figure}

\subsection{Backward search and FM-indexes}

Backward search finds the SA interval of each suffix of a pattern $P$, from shortest to longest. If $[s, e]$ is the interval for $P[j+1..m-1]$ and $c = P[j]$, then the interval for $cP[j+1..m-1]$ is
\[ [\, C[c] + \mathrm{rank}_c(L, s),\; C[c] + \mathrm{rank}_c(L, e+1) - 1 \,], \]
where $C[c]$ is the number of characters in $T$ smaller than $c$ and $\mathrm{rank}_c(L, i)$ counts the copies of $c$ in $L[0..i-1]$. In Figure~\ref{fig:sorted}, the interval for \texttt{ANA} is $[3, 8]$, $L[3..8] = \texttt{BBBDNN}$ contains three \texttt{B}s, and $C[\texttt{B}] = 13$, so the interval for \texttt{BANA} is $[13, 15]$.

The FM-index~\cite{FM05} answers $\mathrm{rank}_c$ queries with a \emph{wavelet tree}~\cite{GGV03} over $L$. The root holds one bit per character of $L$ saying whether it is in the first or second half of the alphabet; each child holds the subsequence of characters on its side, and so on (Figure~\ref{fig:fm}a). A rank query descends from the root to $c$'s leaf, doing one bitvector rank per level. A balanced tree has $\lceil\lg\sigma\rceil$ levels. A Huffman-shaped tree~\cite{MN05} has fewer than $H_0(T) + 1$ levels on average, where $H_0$ is the empirical entropy.

\paragraph{Locality.}
The bitvectors of different levels are stored in different places in memory, so each level of a descent is essentially a random access. For DNA this means two or three cache misses per step. For an alphabet of $10^5$ symbols it means about 17, and the FM-index becomes slow.

\subsection{Compressed suffix arrays}

A CSA~\cite{GV05,Sad03} stores $\Psi[i] = SA^{-1}[SA[i]+1]$: the row of the suffix that starts one position later. Equivalently, $\Psi[i]$ is the position in $L$ of the character $F[i]$. Within the block of rows whose suffixes start with a given character $c$, $\Psi$ is increasing, because removing the same first character from each of those suffixes keeps them in the same relative order. To extend the interval $[s, e]$ of $P[j+1..m-1]$ by $c$, we binary-search $c$'s block of $\Psi$ for the values that lie in $[s, e]$. In the example, the \texttt{B} block of $\Psi$ is rows $13..16$ with values $3, 4, 5, 9$. The values in $[3, 8]$ are in rows $13..15$, so again the interval for \texttt{BANA} is $[13, 15]$ (Figure~\ref{fig:fm}b).

\paragraph{Locality.}
The binary search runs over a contiguous range of length $n_c$. Its cost depends on how often $c$ occurs, not on how many other characters there are. For a large alphabet most characters are rare, their ranges are short, and a step costs a few probes into one small region of memory.

\begin{figure}[!htbp]
\centering\small
\textbf{(a) FM-index}: a balanced wavelet tree over $L$\\[4pt]
\setlength{\tabcolsep}{4pt}
\begin{tabular}{clll}
\toprule
depth & split (0\,|\,1) & subsequence of $L$ & bits \\
\midrule
0 & \texttt{\$ABC}\,|\,\texttt{DN\_} & \texttt{ANCBBBDNNBNNNA\$\_\_\_NAAAAAAAAAA} & \texttt{01000011101110011110000000000} \\
1 & \texttt{\$A}\,|\,\texttt{BC} & \texttt{ACBBBBA\$AAAAAAAAAA} & \texttt{011111000000000000} \\
2 & \texttt{\$}\,|\,\texttt{A} & \texttt{AA\$AAAAAAAAAA} & \texttt{1101111111111} \\
2 & \texttt{B}\,|\,\texttt{C} & \texttt{CBBBB} & \texttt{10000} \\
1 & \texttt{DN}\,|\,\texttt{\_} & \texttt{NDNNNNN\_\_\_N} & \texttt{00000001110} \\
2 & \texttt{D}\,|\,\texttt{N} & \texttt{NDNNNNNN} & \texttt{10111111} \\
\bottomrule
\end{tabular}\\[4pt]
\begin{minipage}{0.9\textwidth}\footnotesize
Each node splits its alphabet into a left half (bit 0) and a right half (bit 1). Computing $\mathrm{rank}_{\texttt{B}}(L, i)$ descends three levels, $\texttt{\$ABCDN\_}\to\texttt{\$ABC}\to\texttt{BC}$: one bitvector rank, and potentially one cache miss, per level.
\end{minipage}\\[10pt]
\textbf{(b) CSA}: the increasing blocks of $\Psi$\\[4pt]
\begin{tabular}{cll}
\toprule
$c$ & rows & $\Psi$ values \\
\midrule
\texttt{\$} & [0..0] & 14 \\
\texttt{A} & [1..12] & 0, 13, 19, 20, 21, 22, 23, 24, 25, 26, 27, 28 \\
\texttt{B} & [13..16] & 3, 4, 5, 9 \\
\texttt{C} & [17..17] & 2 \\
\texttt{D} & [18..18] & 6 \\
\texttt{N} & [19..25] & 1, 7, 8, 10, 11, 12, 18 \\
\texttt{\_} & [26..28] & 15, 16, 17 \\
\bottomrule
\end{tabular}\\[4pt]
\begin{minipage}{0.9\textwidth}\footnotesize
To prepend \texttt{B} to \texttt{ANA} (rows $[3,8]$), binary-search the \texttt{B} block for values in $[3,8]$. They are $3,4,5$, in rows $13..15$: one binary search in one short, contiguous list.
\end{minipage}
\caption{The two ways of taking the backward step from \texttt{ANA} to \texttt{BANA} in Figure~\ref{fig:sorted}.}
\label{fig:fm}
\end{figure}

\subsection{Run-length versions}

When $T$ is repetitive, $r$ is much smaller than $n$, and both families have run-length versions whose size is proportional to $r$ rather than $n$.
\begin{itemize}
\item The \textbf{RLFM-index}~\cite{MN05} stores a sparse bitvector marking where the runs of $L$ start, a wavelet tree over the $r$ run heads, and, for each character, where its runs start in $F$. A rank query finds the run containing the position, does a rank on the run heads, and adds an offset. In the example the run heads are $\texttt{ANCBDNBNA\$\_NA}$, with run lengths 1, 1, 1, 3, 1, 2, 1, 3, 1, 1, 3, 1, 10.
\item The \textbf{RLCSA}~\cite{MNSV10,Sir12} stores $\Psi$ at the level of runs. Brown, Gagie, Manzini, Navarro and Sciortino~\cite{BGMNS25} describe it as a permutation $\Psi'$ on the runs, which is again increasing within each character's block, together with sparse bitvectors $B_L$ and $B_F$ marking where runs start in $L$ and $F$. A step is a binary search over the $r_c$ runs of $c$. With interpolative coding it takes $O(r\log(n/r) + r\log\sigma + \sigma)$ bits and $O(\log r_c)$ time per step.
\end{itemize}
The locality argument carries over. An RLFM-index step costs a wavelet-tree descent over the run heads, about $\lg\sigma$ levels. An RLCSA step costs a binary search over $r_c$ runs, and for a large alphabet $r_c$ is usually small. Brown et al.~\cite{BGMNS25} point out that Ord\'o\~nez, Navarro and Brisaboa~\cite{ONB17} confirmed experimentally that RLCSAs do well on large alphabets with few runs per character. Our experiments (Section~\ref{sec:experiments}) agree, and also show FM-type indexes doing well on small alphabets.

\section{The hybrid}\label{sec:hybrid}

\subsection{The encoding}

For each character $a$, list the distinct characters that follow occurrences of $a$ in $T$ (ignoring the final $\$$), ordered by how often they follow $a$, with ties broken consistently (we use the alphabet order). Call the position of $b$ in $a$'s list its \emph{rank after $a$}, $\rho_a(b)$, starting from 1. The \emph{encoding} of $T$ is
\[ E[i] = \rho_{T[i-1]}(T[i]) \quad\text{for } 1 \le i \le n-2, \qquad E[n-1] = \$. \]
We do not encode $T[0]$. Given $T[0]$ and the lists, $T$ can be decoded from left to right. The lists hold one entry per distinct bigram in $T$.

Figure~\ref{fig:hybrid} shows the lists and the encoding for our example. After \texttt{A} comes \texttt{N} seven times, \texttt{\_} three times and \texttt{B} once, so after an \texttt{A}, the ranks 1, 2 and 3 mean \texttt{N}, \texttt{\_} and \texttt{B}. Most characters in this text are the most likely successor of their predecessor, and 21 of the 27 ranks in $E$ are 1s.

A pattern $P[0..m-1]$ (without $\$$) is encoded the same way, using the text's lists:
\[ Q[j] = \rho_{P[j-1]}(P[j]) \quad\text{for } 1 \le j \le m-1. \]
If some $P[j]$ never follows $P[j-1]$ in $T$, then $P$ does not occur and we can stop immediately. We call $Q[1..m-1]$ the encoding of $P$ and write $\enc(P)$ for it. For example, $\enc(\texttt{BANA}) = 1\,1\,1$: \texttt{A} is the first choice after \texttt{B}, \texttt{N} after \texttt{A}, and \texttt{A} after \texttt{N}.

\subsection{The array \texorpdfstring{$\PsiE$}{Psi\_E} and counting}

Let $SA_E$ be the suffix array of $E[1..n-1]$, whose entries are positions $1..n-1$ of $T$. For each position $i$, the suffix $E[i..n-1]$ encodes $T[i..n-1]$ \emph{given} the preceding character $T[i-1]$. We define
\[ \PsiE[k] = T[SA_E[k] - 1], \]
the unencoded character that precedes the $k$th encoded suffix. $\PsiE$ is to the E-order what $L$ is to the ordinary order: it holds the preceding characters. But it holds the characters of $T$, not of $E$.

\begin{lemma}\label{lem:exact}
Let $P[0..m-1]$ be a pattern with $m \ge 2$ whose encoding $Q = \enc(P)$ is defined, and let $0 \le p \le n-m-1$. Then $T[p..p+m-1] = P$ if and only if $T[p] = P[0]$ and $E[p+1..p+m-1] = Q[1..m-1]$.
\end{lemma}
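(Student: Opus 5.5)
The plan is to prove both directions directly from the definitions of $E$ and $\enc$. The tool that makes this work is that each rank function $\rho_a$ is injective on the list of successors of $a$: distinct characters get distinct positions in $a$'s list. First I will check the indices. Since $p \le n-m-1$, every position $p+j$ with $1 \le j \le m-1$ satisfies $1 \le p+j \le n-2$. So each $E[p+j]$ is a genuine rank $\rho_{T[p+j-1]}(T[p+j])$ and never the terminator $\$$. Likewise $T[p..p+m-1]$ does not contain the final $\$$, so the bigrams involved are all ones that the lists record.

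For the forward direction, assume $T[p..p+m-1] = P$. Then $T[p] = P[0]$ immediately. For each $j$ from $1$ to $m-1$,
\[ E[p+j] = \rho_{T[p+j-1]}(T[p+j]) = \rho_{P[j-1]}(P[j]) = Q[j]. \]
This direction needs nothing beyond substitution.

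For the converse, assume $T[p] = P[0]$ and $E[p+1..p+m-1] = Q[1..m-1]$. I will show $T[p+j] = P[j]$ by induction on $j$; the base case $j=0$ is the hypothesis. For the inductive step, suppose $T[p+j-1] = P[j-1] =: a$. Then $\rho_a(T[p+j]) = E[p+j] = Q[j] = \rho_a(P[j])$. Here $T[p+j]$ lies in $a$'s list because it follows $a$ at a position before the final $\$$. $P[j]$ lies in $a$'s list because $Q$ is assumed to be defined. Since $\rho_a$ assigns distinct ranks to distinct entries of the list, this gives $T[p+j] = P[j]$.

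The only delicate step is this injectivity, together with confirming that both characters being compared really belong to the same list $a$'s. That in turn rests on the inductive hypothesis matching the contexts, which is exactly why the hypothesis $T[p] = P[0]$ cannot be dropped. I expect everything else to be routine index bookkeeping, including the exclusion of $E[n-1] = \$$ guaranteed by $p \le n-m-1$.
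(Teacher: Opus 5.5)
Your proposal is correct and follows the paper's proof exactly: the forward direction by substitution, and the converse by induction on $j$ using the fact that distinct successors of $P[j-1]$ have distinct ranks. Your extra bookkeeping, namely that $p+m-1 \le n-2$ so each $E[p+j]$ is a genuine rank and both $T[p+j]$ and $P[j]$ lie in the same list, makes explicit what the paper leaves implicit.
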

\begin{proof}
If $T[p..p+m-1] = P$, then for each $j \ge 1$, $E[p+j] = \rho_{T[p+j-1]}(T[p+j]) = \rho_{P[j-1]}(P[j]) = Q[j]$. Conversely, suppose $T[p] = P[0]$ and $E[p+1..p+m-1] = Q[1..m-1]$. Assume $T[p+j-1] = P[j-1]$ for some $j \ge 1$. Then $\rho_{P[j-1]}(T[p+j]) = E[p+j] = Q[j] = \rho_{P[j-1]}(P[j])$. Distinct successors of $P[j-1]$ have distinct ranks, so $T[p+j] = P[j]$. The claim follows by induction on $j$.
\end{proof}

By the lemma, counting is exact:
\begin{enumerate}
\item Compute $Q = \enc(P)$ from the rank lists.
\item Backward-search $Q[1..m-1]$ in an FM-type index of $E$, giving an interval $[s, e]$ of $SA_E$.
\item Return $\mathrm{rank}_{P[0]}(\PsiE, e+1) - \mathrm{rank}_{P[0]}(\PsiE, s)$.
\end{enumerate}
If $m = 1$, the interval is the whole array and the answer is $n_{P[0]}$. No special case is needed at the start of the text: an occurrence at $T[0]$ appears as the suffix $E[1..n-1]$ with $\PsiE = T[0]$.

\paragraph{Example.}
To count \texttt{BANA}, we search for $Q = 1\,1\,1$. Its interval in $SA_E$ is $[3, 12]$ (Figure~\ref{fig:hybrid}): ten encoded suffixes start with $1\,1\,1$. They are preceded in $T$ by the characters $\PsiE[3..12] = \texttt{B\_BBAAD\_NN}$, and three of those are \texttt{B}s, so \texttt{BANA} occurs three times. The other seven encode \texttt{BAN} after a \texttt{\_}, \texttt{NAN} after an \texttt{A}, and \texttt{ANA} after a \texttt{D} or an \texttt{N}: once encoded, they look the same as \texttt{ANA} after a \texttt{B}.

\begin{figure}[p]
\centering\small
\begin{minipage}[t]{0.33\textwidth}
\centering
(a) successors in rank order (counts)\\[4pt]
\begin{tabular}{cl}
\toprule
$a$ & successors of $a$ \\
\midrule
\texttt{A} & \texttt{N}\,(7), \texttt{\_}\,(3), \texttt{B}\,(1) \\
\texttt{B} & \texttt{A}\,(4) \\
\texttt{C} & \texttt{A}\,(1) \\
\texttt{D} & \texttt{A}\,(1) \\
\texttt{N} & \texttt{A}\,(6), \texttt{D}\,(1) \\
\texttt{\_} & \texttt{B}\,(2), \texttt{C}\,(1) \\
\bottomrule
\end{tabular}
\end{minipage}\hfill
\begin{minipage}[t]{0.64\textwidth}
\centering
(b) the text and its encoding\\[4pt]
\setlength{\tabcolsep}{2.2pt}\footnotesize
\begin{tabular}{l*{15}{c}}
$i$ & 0 & 1 & 2 & 3 & 4 & 5 & 6 & 7 & 8 & 9 & 10 & 11 & 12 & 13 & 14 \\ \midrule
$T[i]$ & \texttt{B} & \texttt{A} & \texttt{N} & \texttt{A} & \texttt{N} & \texttt{A} & \texttt{\_} & \texttt{B} & \texttt{A} & \texttt{N} & \texttt{D} & \texttt{A} & \texttt{N} & \texttt{A} & \texttt{\_} \\
$E[i]$ & -- & 1 & 1 & 1 & 1 & 1 & 2 & 1 & 1 & 1 & 2 & 1 & 1 & 1 & 2 \\
\end{tabular}\\[6pt]
\begin{tabular}{l*{14}{c}}
$i$ & 15 & 16 & 17 & 18 & 19 & 20 & 21 & 22 & 23 & 24 & 25 & 26 & 27 & 28 \\ \midrule
$T[i]$ & \texttt{B} & \texttt{A} & \texttt{N} & \texttt{A} & \texttt{N} & \texttt{A} & \texttt{\_} & \texttt{C} & \texttt{A} & \texttt{B} & \texttt{A} & \texttt{N} & \texttt{A} & \texttt{\$} \\
$E[i]$ & 1 & 1 & 1 & 1 & 1 & 1 & 2 & 2 & 1 & 3 & 1 & 1 & 1 & \$ \\
\end{tabular}
\end{minipage}\\[14pt]
(c) sorted encoded suffixes\\[4pt]
\setlength{\tabcolsep}{5pt}
\begin{tabular}{rrlcc}
\toprule
$k$ & $SA_E[k]$ & $E[SA_E[k]..n-1]$ & $\BWT(E)$ & $\PsiE[k]$ \\
\midrule
0 & 28 & \texttt{\$} & 1 & \texttt{A} \\
\midrule
1 & 27 & \texttt{1\$} & 1 & \texttt{N} \\
2 & 26 & \texttt{11\$} & 1 & \texttt{A} \\
\rowcolor{hl}3 & 25 & \texttt{111\$} & 3 & \texttt{B} \\
\rowcolor{hl2}4 & 15 & \texttt{1111112213111\$} & 2 & \texttt{\_} \\
\rowcolor{hl}5 & 1 & \texttt{111112111211121111112213111\$} & \$ & \texttt{B} \\
\rowcolor{hl}6 & 16 & \texttt{111112213111\$} & 1 & \texttt{B} \\
\rowcolor{hl2}7 & 2 & \texttt{11112111211121111112213111\$} & 1 & \texttt{A} \\
\rowcolor{hl2}8 & 17 & \texttt{11112213111\$} & 1 & \texttt{A} \\
\rowcolor{hl2}9 & 11 & \texttt{11121111112213111\$} & 2 & \texttt{D} \\
\rowcolor{hl2}10 & 7 & \texttt{111211121111112213111\$} & 2 & \texttt{\_} \\
\rowcolor{hl2}11 & 3 & \texttt{1112111211121111112213111\$} & 1 & \texttt{N} \\
\rowcolor{hl2}12 & 18 & \texttt{1112213111\$} & 1 & \texttt{N} \\
13 & 12 & \texttt{1121111112213111\$} & 1 & \texttt{A} \\
14 & 8 & \texttt{11211121111112213111\$} & 1 & \texttt{B} \\
15 & 4 & \texttt{112111211121111112213111\$} & 1 & \texttt{A} \\
16 & 19 & \texttt{112213111\$} & 1 & \texttt{A} \\
17 & 13 & \texttt{121111112213111\$} & 1 & \texttt{N} \\
18 & 9 & \texttt{1211121111112213111\$} & 1 & \texttt{A} \\
19 & 5 & \texttt{12111211121111112213111\$} & 1 & \texttt{N} \\
20 & 20 & \texttt{12213111\$} & 1 & \texttt{N} \\
21 & 23 & \texttt{13111\$} & 2 & \texttt{C} \\
\midrule
22 & 14 & \texttt{21111112213111\$} & 1 & \texttt{A} \\
23 & 10 & \texttt{211121111112213111\$} & 1 & \texttt{N} \\
24 & 6 & \texttt{2111211121111112213111\$} & 1 & \texttt{A} \\
25 & 22 & \texttt{213111\$} & 2 & \texttt{\_} \\
26 & 21 & \texttt{2213111\$} & 1 & \texttt{A} \\
\midrule
27 & 24 & \texttt{3111\$} & 1 & \texttt{A} \\
\bottomrule
\end{tabular}
\caption{The hybrid for $T = \texttt{BANANA\_BANDANA\_BANANA\_CABANA\$}$. (a) The rank lists. (b) The encoding $E[1..n-1]$. (c) The encoded suffixes in sorted order, with $\BWT(E)$ and $\PsiE$. Rows with a colour are the interval of $1\,1\,1 = \enc(\texttt{BANA})$; the orange ones have $\PsiE = \texttt{B}$ and are the three occurrences of \texttt{BANA}. $\BWT(E)$ has 11 runs, against 13 for $\BWT(T)$, but $\PsiE$ has 22.}
\label{fig:hybrid}
\end{figure}

\subsection{Why this is a hybrid}

All but one step of the search are FM-index steps on $\BWT(E)$, whose alphabet is small and skewed. Its size is the largest number of distinct successors of any character, and in practice most entries are 1s and 2s, so a Huffman-shaped wavelet tree over $\BWT(E)$ has depth about 2 (Section~\ref{sec:experiments}). The last step is a CSA step. If we store, for each character $c$, the sorted list of positions (or runs) of $c$ in $\PsiE$, then $\mathrm{rank}_c(\PsiE, \cdot)$ is a binary search in $c$'s list. That is exactly the operation on $c$'s block of $\Psi$ in a CSA, and it benefits from a large alphabet in the same way. Concatenating these lists in character order lists the suffixes of $T$ sorted by the key $(T[i-1], E[i..n-1])$. So the lists form a $\Psi$-like map from that order to the E-order, just as $\Psi$ maps from $F$ to $L$.

Computing $\enc(P)$ needs one lookup per character in the rank lists. The lookups can be done for the whole pattern before the search starts, with good locality, since consecutive lookups are for consecutive characters of $P$.

\subsection{Locating}\label{sec:locate}

Locating with suffix-array samples works unchanged. LF-steps on $\BWT(E)$ move from the suffix of $E$ at position $i$ to the one at $i-1$, and positions in $E$ are positions in $T$. So we sample $SA_E$ and walk to the nearest sample. The rows we need to locate are the positions $k$ of $P[0]$ in $\PsiE[s..e]$, which the CSA-style lists give directly, and the occurrence of $P$ for row $k$ starts at $SA_E[k] - 1$.

An r-index-style~\cite{GNP20} locate with $\phi$ also seems possible, though we have not implemented it or checked it carefully. Sort the suffixes of $T$ by the key $(T[i], E[i+1..n-1])$ and call this order $SA'$. The occurrences of $P$ form one contiguous range of $SA'$. Consider the map from $i$ to $i-1$ in this order. Within the block of suffixes preceded by a character $c$, the suffixes are sorted by $(\rho_c(T[i]), E[i+1..n-1])$. So the map preserves relative order among suffixes with the same $T[i]$ and the same $T[i-1]$. Two suffixes adjacent in $SA'$ with the same first character and the same preceding character therefore stay adjacent, which is the property behind the toehold lemma and $\phi$. Samples would be needed:
\begin{itemize}
\item at run boundaries of the preceding characters listed in $SA'$ order;
\item at the $\sigma$ boundaries in $SA'$ where the first character changes;
\item at run ends of $\PsiE$, to get an initial toehold from an r-index on $E$.
\end{itemize}
That is $O(r(E) + r(\PsiE) + r' + \sigma)$ samples in total, where $r'$ is the number of runs in that listing. $SA'$ is a special case of the \emph{local orderings} of Giancarlo et al.~\cite{GMRRS23} (Section~\ref{sec:related}), so their framework may already give this.

\subsection{Space}

The index consists of the rank lists (one entry per distinct bigram), a run-length representation of $\BWT(E)$, and a run-length representation of $\PsiE$. $E$ is a local function of $T$ (each $E[i]$ depends only on $T[i-1..i]$), so a phrase of $T$'s LZ77 parse is also a copy in $E$ except possibly for its first character, whose rank depends on the character before the phrase. Hence $z(E) \le 2z(T)$, and $r(E) = O(z(T)\log^2 n)$ by Kempa and Kociumaka's bound~\cite{KK22}. But we see no reason for $r(E) \le r(T)$ in general. Sorting by the encoding interleaves suffixes that differ in $T$ but have the same encoding, which can split runs. And a point mutation changes two characters of $E$ instead of one. $\PsiE$ is the bigger concern. It is a sequence over the full alphabet, and on our test data with $\sigma \ge 10$ it has 1.3 to 2.3 times as many runs as $\BWT(T)$. The example is typical in this respect: $\BWT(E)$ has fewer runs than $\BWT(T)$ (11 against 13), but $\PsiE$ has 22.

\section{Related work}\label{sec:related}

We are not aware of this exact construction. The following are the closest ideas we know.

\begin{description}
\item[Context-dependent orderings.] Giancarlo, Manzini, Restivo, Rosone and Sciortino~\cite{GMRRS23} define BWT-like transforms in which the alphabet order used for sorting depends on the preceding context. They show that these \emph{local orderings} can be inverted and searched in linear time and can be turned into r-indexes. The order $SA'$ in Section~\ref{sec:locate} is a local ordering with a one-character context, and the frequency ranking is one particular choice of orderings. The E-order itself also permutes the first character by the preceding one, so it is a close relative rather than an instance. Choosing a global alphabet order to improve compression or reduce runs goes back at least to Chapin and Tate~\cite{CT98}; minimizing runs this way is NP-hard~\cite{BGT20}.
\item[Edge ranks in the GBWT.] Sir\'en et al.'s GBWT~\cite{SGNPD20} stores, for each node of a pangenome graph, the BWT restricted to that node as runs of \emph{edge ranks}: the position of the next node among the node's outgoing edges. That is our encoding (with ranks in node-ID order rather than by frequency), and it is used for the same reason: a huge alphabet of node IDs with small local out-degrees. The GBWT sorts by the real paths, however, not by their encoding.
\item[Large-alphabet self-indexes.] Word-based self-indexes~\cite{FBNCPR12} target natural-language text with a large alphabet of words. Two-level indexes~\cite{HOKBBG24,BGMNS25} index a text together with its parse; Deng et al.~\cite{DHKS22} index a parse without an index for the text itself. Brown et al.~\cite{BGMNS25} note that RLCSAs may be more suitable than RLFM-indexes for the parse, which is part of the motivation for this work. The encoding itself is an order-1 symbol-ranking transform, in the spirit of Fenwick's symbol-ranking compressors~\cite{Fen97}.
\end{description}

A natural baseline suggested by the GBWT is an FM-index of $T$ whose BWT is split into blocks by the character in $F$, with each block relabeled by rank among that character's predecessors. Every step would then be a rank over a small local alphabet, and relabeling within a block does not split runs. We have not implemented this baseline, and it may be the strongest competitor for the hybrid.

\section{Implementation}\label{sec:impl}

Our prototype, in C++, reads a file of fixed-width integers, maps them to dense identifiers, computes the rank lists, $E$, $\BWT(E)$ and $\PsiE$, and builds suffix arrays with libsais~\cite{libsais}. It currently only counts. The code is available at \url{https://github.com/TravisGagie/Hybrid-FM-CSA}. We implemented four interchangeable rank structures and used each both for $\BWT(E)$ and for the BWT of $T$, to compare against standard indexes of $T$ built with the same code.

\begin{description}
\item[Explicit run lists.] For each character, the sorted list of its runs as pairs (start position, number of earlier copies), at 64 bits per run. A small jump table on the high bits of the position narrows the binary search. Applied to $\BWT(T)$, this is an uncompressed RLCSA: the pairs are the entries of $B_L$ and $B_F$ for each run, stored explicitly. Searching on positions instead of run indices means no rank query on $B_L$ is needed.
\item[Elias--Fano run lists.] A compressed RLCSA following Brown et al.~\cite{BGMNS25}, simplified. The runs, grouped by character, are stored as two Elias--Fano sequences: $c \cdot n + \text{start}$ (one predecessor query both finds $c$'s block and searches it) and the run's position in $F$ (the paper's $B_F$). The space is about $r(\lg(\sigma n/r) + 2) + r(\lg(n/r) + 2)$ bits, matching the paper's bound, and it avoids the rank on $B_L$ that the paper removes with run splitting and an extra bitvector.
\item[Huffman-shaped wavelet tree.] Over the whole sequence, as in an FM-index. Backward search only needs $\mathrm{rank}_c$, never comparisons between characters, so a Huffman shape is optimal and Hu--Tucker is not needed.
\item[Run-length Huffman-shaped wavelet tree.] An RLFM-index: an Elias--Fano sequence of run starts, a Huffman-shaped wavelet tree over the run heads, and an Elias--Fano sequence of the runs' positions in $F$.
\end{description}

In the hybrid, $\PsiE$ is always stored with Elias--Fano run lists, since it is used for one step per query and is the natural place for the CSA structure. The rank lists are stored as sorted arrays of (successor, rank) pairs per character.

The RLFM-index implementation is not tuned. Each rank query does an Elias--Fano predecessor search, a wavelet-tree descent at two adjacent positions, and two Elias--Fano accesses, and the two ends of a backward-search interval are handled separately. A careful implementation, such as the original layout of M\"akinen and Navarro~\cite{MN05} with a fast sparse-bitvector rank~\cite{OS07}, would likely be faster, so we give its query times less weight than the others.

\section{Experiments}\label{sec:experiments}

\paragraph{Data.}
We generated synthetic repetitive sequences of 50 million 32-bit integers. A base sequence of $10^6$ symbols is drawn from an order-1 Markov source over $\sigma$ symbols: each symbol has 8 random successors, chosen with Zipf-like probabilities proportional to $j^{-1.5}$. The base sequence is copied 50 times, and each copy gets independent point mutations at rate $\mu$, each replacing a symbol by a uniformly random symbol. We used $\sigma \in \{2, 4, 10, 100, 1000, 10^4, 10^5\}$ and $\mu \in \{0.1\%, 1\%\}$, plus a noise sweep at $\sigma = 10^5$. Uniformly random substitutions are harsh on the hybrid, because each one creates new bigrams and so new entries in the rank lists and new values in $E$. In real parses and word sequences, substitutions tend to be plausible symbols.

\paragraph{Method.}
For each setting we built eight indexes: the hybrid with each of the four structures on $\BWT(E)$, and an index of $T$ with each of the four structures on $\BWT(T)$. All eight return identical counts on all queries, and on small inputs they match brute force. Queries are random substrings of $T$ of length $m \in \{4, 8, 16, 32\}$, 20{,}000 per length (100{,}000 for $\sigma = 100000$ with 0.1\% noise). We report $m = 32$; the other lengths are consistent. Because every query is a substring of $T$, every search runs to completion; with 50 copies of the base sequence, a pattern typically occurs about 50 times. Patterns that do not occur would usually stop early, which would favour the indexes with more expensive steps. The rows for $\sigma = 100000$ were measured in separate runs from the others. In them, the fast hybrid is about twice as fast as at $\sigma = 10000$, which we cannot explain and have not investigated; the difference may partly reflect load on the machine. The machine is a shared two-core virtual machine, and repeated runs vary by about $\pm 20\%$, so differences smaller than that should be read as ties. Sizes are exact.

\paragraph{Runs.}
Table~\ref{tab:runs} shows the number of runs. The $H_0(T)$ column gives the value for 0.1\% noise. With 1\% noise it is at most 0.01 bits higher: $H_0$ depends only on the symbol frequencies, which are set by the Markov source, and noise only moves a small fraction of the symbols toward a uniform distribution. Noise affects repetitiveness, which $H_0$ does not measure, and that shows up in the numbers of runs. $\BWT(E)$ has up to 1.25 times as many runs as $\BWT(T)$ with 0.1\% noise, and up to 1.9 times as many with 1\% noise. For $\sigma \ge 10$, $\PsiE$ has 1.3 to 2.3 times as many runs as $\BWT(T)$. Together the two have 1.7 to 3.0 times as many runs as $\BWT(T)$ at 0.1\% noise, and 1.6 to 4.1 times as many at 1\% noise. For $\sigma \le 4$, $\PsiE$ has no more runs than $\BWT(T)$, since there the encoding carries almost all the information. The extra runs are the main cost of the hybrid.

\begin{table}[t]
\centering\small
\begin{tabular}{rrrrrrrrr}
\toprule
& & & \multicolumn{3}{c}{0.1\% noise} & \multicolumn{3}{c}{1\% noise} \\
\cmidrule(lr){4-6}\cmidrule(lr){7-9}
$\sigma$ & $H_0(T)$ & $\sigma_E$ & $r(T)$ & $r(E)$ & $r(\PsiE)$ & $r(T)$ & $r(E)$ & $r(\PsiE)$ \\
\midrule
2 & 0.99 & 3 & 0.67 & 0.68 & 0.43 & 3.15 & 3.20 & 1.96 \\
4 & 1.85 & 5 & 1.01 & 1.00 & 1.03 & 4.30 & 4.41 & 4.22 \\
10 & 3.03 & 11 & 1.03 & 1.07 & 1.30 & 4.52 & 4.91 & 5.30 \\
100 & 6.29 & 101 & 1.03 & 1.14 & 1.50 & 4.19 & 5.12 & 5.86 \\
1000 & 9.66 & 264--920 & 1.00 & 1.15 & 1.53 & 3.79 & 5.14 & 6.01 \\
10000 & 12.97 & 45--319 & 0.95 & 1.15 & 1.53 & 3.34 & 5.16 & 6.06 \\
100000 & 16.22 & 18--55 & 0.93 & 1.15 & 1.59 & 2.90 & 5.37 & 6.61 \\
\bottomrule
\end{tabular}
\caption{Numbers of runs, in millions, for $n = 5\times 10^7$. $\sigma_E$ is the alphabet size of $E$ (the two values are for 0.1\% and 1\% noise where they differ). $r(T)$, $r(E)$ and $r(\PsiE)$ are the runs in $\BWT(T)$, $\BWT(E)$ and $\PsiE$.}
\label{tab:runs}
\end{table}

\paragraph{Space and time.}
Table~\ref{tab:main} gives the query time and the size of each index. For each approach we show two variants: the fastest and the smallest.
\begin{itemize}
\item \textbf{Hybrid.} Fast: plain Huffman-shaped wavelet tree on $\BWT(E)$. Compact: run-length Huffman-shaped wavelet tree on $\BWT(E)$. Both use Elias--Fano run lists for $\PsiE$ and include the rank lists. Note that the fast hybrid is only partly run-length compressed: its wavelet tree takes about $n(H_0(E)+1)$ bits, which grows with $n$ rather than with $r$. Its size is moderate here only because $H_0(E)$ is about 2 bits. On a larger or more repetitive collection, it would fall further behind the run-length indexes in space.
\item \textbf{FM-index of $T$.} Fast: plain Huffman-shaped wavelet tree. Compact: RLFM-index.
\item \textbf{RLCSA of $T$.} Fast: explicit run lists. Compact: Elias--Fano run lists.
\end{itemize}

\begin{table}[t]
\centering\small
\setlength{\tabcolsep}{4pt}
\begin{tabular}{rr *{6}{r@{\,/\,}r}}
\toprule
& & \multicolumn{4}{c}{Hybrid} & \multicolumn{4}{c}{FM-index of $T$} & \multicolumn{4}{c}{RLCSA of $T$} \\
\cmidrule(lr){3-6}\cmidrule(lr){7-10}\cmidrule(lr){11-14}
noise & $\sigma$ & \multicolumn{2}{c}{fast} & \multicolumn{2}{c}{compact} & \multicolumn{2}{c}{plain WT} & \multicolumn{2}{c}{RLFM} & \multicolumn{2}{c}{explicit} & \multicolumn{2}{c}{Elias--Fano} \\
\midrule
0.1\% & 2 & \textbf{\textcolor{red}{3.9}} & 9.6 & 7.5 & 2.6 & 4.8 & 9.7 & 9.9 & \textbf{1.5} & 4.5 & 5.3 & 7.3 & \textbf{1.5} \\
 & 4 & \textbf{\textcolor{red}{8.7}} & 13.7 & 15.3 & 4.5 & 10.2 & 13.4 & 11.5 & \textbf{2.3} & 8.9 & 8.0 & 12.0 & \textbf{2.3} \\
 & 10 & 9.9 & 15.6 & 10.8 & 5.4 & 17.6 & 20.7 & 16.2 & 2.5 & \textbf{\textcolor{red}{5.5}} & 8.2 & 13.0 & \textbf{2.4} \\
 & 100 & 11.8 & 18.8 & 18.6 & 6.7 & 34.6 & 42.4 & 26.9 & 2.9 & \textbf{\textcolor{red}{7.4}} & 8.3 & 15.9 & \textbf{2.8} \\
 & 1000 & 11.5 & 20.4 & 15.1 & 8.1 & 56.0 & 65.0 & 33.7 & 3.3 & \textbf{\textcolor{red}{11.1}} & 8.0 & 21.2 & \textbf{3.2} \\
 & 10000 & 13.1 & 21.5 & 22.0 & 9.3 & 67.4 & 87.5 & 40.2 & 4.0 & \textbf{\textcolor{red}{11.1}} & 7.9 & 17.3 & \textbf{3.5} \\
 & 100000 & 6.9 & 24.1 & 16.0 & 13.7 & 104.5 & 112.8 & 52.9 & 8.1 & \textbf{\textcolor{red}{2.7}} & 10.2 & 10.2 & \textbf{4.4} \\
\midrule
1\% & 2 & 5.4 & 12.3 & 15.9 & 9.5 & \textbf{\textcolor{red}{3.0}} & 9.7 & 19.2 & 5.7 & 9.3 & 25.5 & 13.2 & \textbf{5.6} \\
 & 4 & \textbf{\textcolor{red}{7.8}} & 18.9 & 17.0 & 14.9 & 8.5 & 13.5 & 25.9 & \textbf{7.5} & 8.1 & 34.1 & 13.0 & \textbf{7.5} \\
 & 10 & \textbf{\textcolor{red}{8.5}} & 22.6 & 28.6 & 18.1 & 17.7 & 20.7 & 36.3 & \textbf{8.4} & 14.9 & 36.2 & 20.2 & \textbf{8.4} \\
 & 100 & \textbf{\textcolor{red}{12.2}} & 28.6 & 28.3 & 22.3 & 29.9 & 42.4 & 50.6 & 9.8 & 15.9 & 33.5 & 25.1 & \textbf{9.4} \\
 & 1000 & \textbf{\textcolor{red}{14.4}} & 36.3 & 31.2 & 29.9 & 47.6 & 65.0 & 60.9 & 10.7 & 16.1 & 30.3 & 30.2 & \textbf{10.2} \\
 & 10000 & \textbf{\textcolor{red}{13.9}} & 42.1 & 39.9 & 35.8 & 77.9 & 87.6 & 105.8 & 11.6 & 20.3 & 27.0 & 32.2 & \textbf{10.5} \\
 & 100000 & 7.2 & 47.6 & 20.7 & 43.3 & 86.5 & 112.9 & 57.0 & 15.2 & \textbf{\textcolor{red}{4.6}} & 25.4 & 12.0 & \textbf{11.1} \\
\bottomrule
\end{tabular}
\caption{Counting time in $\mu$s per query for $m = 32$ / index size in MB, for $n = 5\times 10^7$. In each row, the fastest query time is in bold red and the smallest size is in bold. The FM-index and RLCSA columns index $T$ directly.}
\label{tab:main}
\end{table}

\paragraph{Observations.}
\begin{itemize}
\item \textbf{Space.} The smallest index in every row is the Elias--Fano RLCSA of $T$, with the RLFM-index tied for small alphabets and up to 1.8 times larger for large ones. The compact hybrid is 1.7 to 3.9 times larger than the Elias--Fano RLCSA, and the gap grows with $\sigma$ and with noise. Most of the difference comes from $\PsiE$ and the rank lists. For example, at $\sigma = 10^5$ and 1\% noise, the compact hybrid's 43.3~MB is 9.6~MB for $\BWT(E)$, 21.8~MB for $\PsiE$ and 11.9~MB for the rank lists.
\item \textbf{Time, fast variants.} The fast hybrid's query time is nearly independent of $\sigma$, at 4 to 14~$\mu$s, because its wavelet tree has depth 1.3 to 2.5 bits per symbol for all $\sigma$. The plain FM-index of $T$ slows down in proportion to $H_0(T)$, as the locality argument predicts. The explicit RLCSA slows down as the number of runs per character grows, and speeds up again at $\sigma = 10^5$, where each character has only about 10 runs.
\item \textbf{Where the hybrid wins on time.} With 1\% noise and $10 \le \sigma \le 10^4$, the fast hybrid is the fastest index, 1.1 to 1.8 times faster than the next one. At $\sigma = 1000$ the margin (14.4 against 16.1~$\mu$s) is within measurement noise, so the clear wins are for $\sigma = 10$, 100 and 10000. In those settings it is 2.7 to 4 times larger than the Elias--Fano RLCSA of $T$; it is smaller than the explicit RLCSA for $\sigma = 10$ and $100$, and larger for $\sigma = 1000$ and $10^4$. With 0.1\% noise it is never clearly the fastest: it ties for $\sigma \le 4$, 1000 and $10^4$, and the explicit RLCSA is clearly faster, and smaller, for $\sigma = 10$, 100 and $10^5$. For $\sigma \le 4$ the encoding does nothing useful ($\sigma_E = \sigma + 1$).
\item \textbf{Compact variants.} The compact hybrid is dominated on this data: it is always larger than the Elias--Fano RLCSA of $T$, and about as fast or up to 1.7 times slower.
\item \textbf{More noise at $\sigma = 10^5$.} At noise levels of 0.5\%, 2\% and 5\%, the fast hybrid takes 35, 70 and 125~MB and 6.8, 10.3 and 15.4~$\mu$s, against 17, 42 and 89~MB and 3.1, 7.3 and 13.1~$\mu$s for the explicit RLCSA of $T$. The RLCSA degrades faster with noise, but it stays ahead.
\end{itemize}

In short, on this data the fast hybrid offers a different point on the space--time curve: at intermediate alphabet sizes with some noise it is the fastest index, at 2.7 to 4 times the space of the smallest one. It is never the smallest, and when space matters most, the compressed RLCSA of $T$ is the better choice.

\section{Discussion and open questions}\label{sec:discussion}

\paragraph{Runs on real data.}
The hybrid's size is dominated by the runs of $\BWT(E)$ and especially $\PsiE$. Our generator replaces symbols with uniformly random ones, which creates many new bigrams: at $\sigma = 10^5$, going from 0.1\% to 5\% noise increases the number of distinct bigrams from 0.47 to 5.1 million. Real parses, minimizer digests and word sequences should behave better. Measuring $r(E)$ and $r(\PsiE)$ on real data is the first thing to do. Bounds on $r(\PsiE)$ in terms of $r(T)$ or other repetitiveness measures would also be interesting. Run boundaries in $\PsiE$ correspond to distinct substrings of $T$, which suggests a $\delta$-style bound.

\paragraph{Smaller hybrids.}
In the compact hybrid, the rank lists take about a quarter of the space and are stored naively. They could be compressed, for example with Elias--Fano for the successor identifiers. An order-$k$ version would rank each character by the $k$ characters before it and store $k$-grams in $\PsiE$. That makes $E$ more skewed and $\PsiE$ bigger, and it is not clear which effect wins.

\paragraph{Move structures.}
Move structures~\cite{NT21}, as used in Movi~\cite{ZBAGL24}, represent a BWT interval by the runs containing its ends. To extend the interval by $c$, when the run at an end is not a run of $c$, they scan to the nearest run of $c$. Over a large alphabet those scans can be long. The standard fixes store, at each run or at sampled runs, pointers or counts for every character, which costs space proportional to $\sigma$ per sample. A rough model shows the problem does not go away with skew. Suppose the runs of $c$ are spread evenly, about $r/r_c$ runs apart, and suppose the pattern's characters appear in proportion to their numbers of runs. Then the expected scan length is $\sum_c (r_c/r)(r/r_c) = \sigma$ runs, whatever the distribution. This is why we have argued that move structures are best suited to small alphabets.

The hybrid suggests a combination: a move structure on $\BWT(E)$, followed by one CSA-style step on $\PsiE$, which is a binary search and needs no scanning. The rough model says the scans then cost about $\sigma_E$ runs. That is small on clean data (18 at $\sigma = 10^5$) but grows with noise (210 at 5\% noise).

\paragraph{Other open questions.}
\begin{itemize}
\item Implementing and evaluating locating, including the $\phi$-based approach of Section~\ref{sec:locate}.
\item Comparing against the relabeled-block FM-index baseline of Section~\ref{sec:related}, and against a tuned RLFM-index.
\item Using the hybrid for the parse level of a two-level index~\cite{BGMNS25}, which is where large, repetitive alphabets first motivated this work.
\end{itemize}

\section*{Acknowledgments}
The idea of the hybrid is due to the author. Claude (Anthropic) wrote the prototype and ran the experiments, worked out the locating sketch in Section~\ref{sec:locate} and the move-structure estimate in Section~\ref{sec:discussion}, and wrote this document, in conversation with the author.


\begin{thebibliography}{99}\small
\bibitem{AhmedEtAl23} O.~Y. Ahmed, M.~Rossi, T.~Gagie, C.~Boucher, and B.~Langmead. SPUMONI 2: improved classification using a pangenome index of minimizer digests. \emph{Genome Biology}, 24:122, 2023.
\bibitem{BGT20} J.~W. Bentley, D.~Gibney, and S.~V. Thankachan. On the complexity of BWT-runs minimization via alphabet reordering. In \emph{Proc.\ 28th European Symposium on Algorithms (ESA)}, 2020.
\bibitem{BGMNS25} N.~K. Brown, T.~Gagie, G.~Manzini, G.~Navarro, and M.~Sciortino. Faster run-length compressed suffix arrays. In \emph{From Strings to Graphs, and Back Again: A Festschrift for Roberto Grossi's 60th Birthday}, OASIcs, article 10, 2025.
\bibitem{CT98} B.~Chapin and S.~R. Tate. Higher compression from the Burrows--Wheeler transform by modified sorting. In \emph{Proc.\ Data Compression Conference (DCC)}, 1998.
\bibitem{DHKS22} J.-J. Deng, W.-K. Hon, D.~K\"oppl, and K.~Sadakane. FM-indexing grammars induced by suffix sorting for long patterns. In \emph{Proc.\ Data Compression Conference (DCC)}, 2022.
\bibitem{FBNCPR12} A.~Fari\~na, N.~R. Brisaboa, G.~Navarro, F.~Claude, \'A.~S. Places, and E.~Rodr\'iguez. Word-based self-indexes for natural language text. \emph{ACM Transactions on Information Systems}, 30(1):1, 2012.
\bibitem{Fen97} P.~Fenwick. Symbol ranking text compression with Shannon recodings. \emph{Journal of Universal Computer Science}, 3(2):70--85, 1997.
\bibitem{FM05} P.~Ferragina and G.~Manzini. Indexing compressed text. \emph{Journal of the ACM}, 52:552--581, 2005.
\bibitem{GNP20} T.~Gagie, G.~Navarro, and N.~Prezza. Fully functional suffix trees and optimal text searching in BWT-runs bounded space. \emph{Journal of the ACM}, 67:1--54, 2020.
\bibitem{GMRRS23} R.~Giancarlo, G.~Manzini, A.~Restivo, G.~Rosone, and M.~Sciortino. A new class of string transformations for compressed text indexing. \emph{Information and Computation}, 294:105068, 2023.
\bibitem{libsais} I.~Grebnov. libsais: fast linear-time suffix array, LCP and BWT construction. \url{https://github.com/IlyaGrebnov/libsais}.
\bibitem{GGV03} R.~Grossi, A.~Gupta, and J.~S. Vitter. High-order entropy-compressed text indexes. In \emph{Proc.\ 14th ACM-SIAM Symposium on Discrete Algorithms (SODA)}, 2003.
\bibitem{GV05} R.~Grossi and J.~S. Vitter. Compressed suffix arrays and suffix trees with applications to text indexing and string matching. \emph{SIAM Journal on Computing}, 35:378--407, 2005.
\bibitem{HOKBBG24} A.~Hong, M.~Oliva, D.~K\"oppl, H.~Bannai, C.~Boucher, and T.~Gagie. PFP-FM: an accelerated FM-index. \emph{Algorithms for Molecular Biology}, 19:15, 2024.
\bibitem{KK22} D.~Kempa and T.~Kociumaka. Resolution of the Burrows--Wheeler transform conjecture. \emph{Communications of the ACM}, 65(6):91--98, 2022.
\bibitem{MN05} V.~M\"akinen and G.~Navarro. Succinct suffix arrays based on run-length encoding. \emph{Nordic Journal of Computing}, 12:40--66, 2005.
\bibitem{MNSV10} V.~M\"akinen, G.~Navarro, J.~Sir\'en, and N.~V\"alim\"aki. Storage and retrieval of highly repetitive sequence collections. \emph{Journal of Computational Biology}, 17:281--308, 2010.
\bibitem{NT21} T.~Nishimoto and Y.~Tabei. Optimal-time queries on BWT-runs compressed indexes. In \emph{Proc.\ 48th International Colloquium on Automata, Languages, and Programming (ICALP)}, 2021.
\bibitem{OS07} D.~Okanohara and K.~Sadakane. Practical entropy-compressed rank/select dictionary. In \emph{Proc.\ 9th Workshop on Algorithm Engineering and Experiments (ALENEX)}, 2007.
\bibitem{ONB17} A.~Ord\'o\~nez, G.~Navarro, and N.~R. Brisaboa. Grammar compressed sequences with rank/select support. \emph{Journal of Discrete Algorithms}, 43:54--71, 2017.
\bibitem{Sad03} K.~Sadakane. New text indexing functionalities of the compressed suffix arrays. \emph{Journal of Algorithms}, 48:294--313, 2003.
\bibitem{SGNPD20} J.~Sir\'en, E.~Garrison, A.~M. Novak, B.~Paten, and R.~Durbin. Haplotype-aware graph indexes. \emph{Bioinformatics}, 36(2):400--407, 2020.
\bibitem{Sir12} J.~Sir\'en. \emph{Compressed Full-Text Indexes for Highly Repetitive Collections}. PhD thesis, University of Helsinki, 2012.
\bibitem{ZBAGL24} M.~Zakeri, N.~K. Brown, O.~Y. Ahmed, T.~Gagie, and B.~Langmead. Movi: a fast and cache-efficient full-text pangenome index. \emph{iScience}, 27, 2024.
\end{thebibliography}
\end{document}